\newtheorem{theorem}{Theorem}[section]
\newtheorem{lemma}[theorem]{Lemma}
\newtheorem{example}{Example}[section]
\begin{document}
\title{Independence Properties  of Generalized Submodular Information Measures} 

% %%% Single author, or several authors with same affiliation:
% \author{%
%   \IEEEauthorblockN{Stefan M.~Moser}
%   \IEEEauthorblockA{ETH Zürich\\
%                     ISI (D-ITET)\\
%                     CH-8092 Zürich, Switzerland\\
%                     Email: moser@isi.ee.ethz.ch}
% }

%%% Several authors with up to three affiliations:
\author{%
  \IEEEauthorblockN{Himanshu Asnani}
  \IEEEauthorblockA{School of Technology and Computer Science\\
                    TIFR, Mumbai\\ 
                    Email: himanshu.asnani@tifr.res.in}
                    
  \and
  \IEEEauthorblockN{Jeff Bilmes}
  \IEEEauthorblockA{Department of ECE\\
                    University of Washington, Seattle\\ 
                    Email: bilmes@uw.edu}

  \and
                      \IEEEauthorblockN{Rishabh Iyer}
  \IEEEauthorblockA{Department of Computer Science\\
                    University of Texas, Dallas\\
                    Email: rishabh.iyer@utdallas.edu}
}

%%% Many authors with many affiliations:
% \author{%
%   \IEEEauthorblockN{Albus Dumbledore\IEEEauthorrefmark{1},
%                     Olympe Maxime\IEEEauthorrefmark{2},
%                     Stefan M.~Moser\IEEEauthorrefmark{3}\IEEEauthorrefmark{4},
%                     and Harry Potter\IEEEauthorrefmark{1}}
%   \IEEEauthorblockA{\IEEEauthorrefmark{1}%
%                     Hogwarts School of Witchcraft and Wizardry,
%                     1714 Hogsmeade, Scotland,
%                     \{dumbledore, potter\}@hogwarts.edu}
%   \IEEEauthorblockA{\IEEEauthorrefmark{2}%
%                     Beauxbatons Academy of Magic,
%                     1290 Pyrénées, France,
%                     maxime@beauxbatons.edu}
%   \IEEEauthorblockA{\IEEEauthorrefmark{3}%
%                     ETH Zürich, ISI (D-ITET), ETH Zentrum, 
%                     CH-8092 Zürich, Switzerland,
%                     moser@isi.ee.ethz.ch}
%   \IEEEauthorblockA{\IEEEauthorrefmark{4}%
%                     National Chiao Tung University (NCTU), 
%                     Hsinchu, Taiwan,
%                     moser@isi.ee.ethz.ch}
% }

\maketitle

%%%%%%
%% Abstract: 
%% If your paper is eligible for the student paper award, please add
%% the comment "THIS PAPER IS ELIGIBLE FOR THE STUDENT PAPER
%% AWARD." as a first line in the abstract. 
%% For the final version of the accepted paper, please do not forget
%% to remove this comment!
%%
\begin{abstract}
   Recently a class of generalized information measures was defined on sets of items parametrized by submodular functions~\cite{iyer2021submodular}. In this paper,   we propose and study various notions of independence between sets with respect to such information measures, and connections thereof. Since entropy can also be used to parametrize such measures, we derive interesting independence properties for the entropy of sets of random variables.  We also study the notion of multi-set independence and its properties.  Finally, we present optimization algorithms for obtaining a set that is independent of another given set, and also discuss the implications and applications of combinatorial independence. 
\end{abstract}

\section{Introduction}
In this paper, we consider the recently proposed class of submodular information measures~\cite{iyer2021submodular}, and study general combinatorial independence characterizations admitted by them. A set function $f: 2^V \rightarrow \mathbf{R}$ over
a finite set $V = \{1, 2, \ldots, n\}$ is \emph{submodular} 
\cite{fujishige2005submodular} if for all
subsets $S, T \subseteq V$, it holds that $f(S) + f(T) \geq f(S \cup
T) + f(S \cap T)$. Given a set $S \subseteq V$, we define the
\emph{gain} of an element $j \notin S$ in the context $S$ as $f(j | S)
= f(S \cup j) - f(S)$. A perhaps more intuitive
characterization of submodularity is as follows:
a function $f$ is submodular if it satisfies
\emph{diminishing marginal returns}, namely $f(j | S) \geq f(j | T)$
for all $S \subseteq T, j \notin T$, and is \emph{monotone} if $f(j |
S) \geq 0$ for all $j \notin S, S \subseteq V$. Submodular functions are a rich and expressive class of models which capture a number of important aspects like coverage (e.g. set cover function), representation (e.g. facility location function), diversity (e.g. log-determinants), and information (e.g. joint entropy of a set of random variables). Submodular functions have been shown to be closely connected with convexity~\cite{lovasz1983submodular,bach2011learning}, and concavity~\cite{iyer2020concave}.

Given a submodular function $f$, we can define the submodular (conditional) mutual information $I_f(A; B | C)$ as:\
\begin{align}
    I_f(A; B | C) = f(A | C) + f(B | C) - f(A \cup B | C)
\end{align}
where the conditional information $f(A | C) = f(A \cup C) - f(C)$.
Recently, \cite{iyer2021submodular,levin2020online} studied several important and interesting properties of $I_f$ such as non-negativity, monotonicity, conditions for submodularity, and upper/lower bounds. Furthermore, \cite{iyer2021submodular} also studied multi-set extensions of the submodular mutual information to capture joint information between $k$ sets. As argued by \cite{iyer2021submodular, levin2020online}, the submodular mutual information (or multi-set mutual information) effectively captures the \emph{shared} information between two sets (or multiple sets), from the lens of the submodular function $f$. \cite{iyer2021submodular} also study a number of examples of $I_f$, and the modeling capabilities of these in various machine learning applications such as query focused and privacy preserving summarization, clustering and disparate partitioning. 

In this work, we extend the paradigm of combinatorial information measures to study the concept of independence among sets. In particular, we first introduce six notions of (conditional) independence between subsets in Section II. Next, in Section III, we study the relationship between the independence types for general submodular functions, and then study these connections for a few special cases, including set cover functions, modular functions, and Entropy. We then study independence between multiple sets (Section IV), 
and then present optimization algorithms for obtaining a set $A$ which is independent of a given set $B$ (Section V), and also discuss the implications and applications of this study. While the main focus of this work is combinatorial independence, we note that our results in the special case of Entropy function (specifically Lemma 3.3), is one of the first (to our knowledge) which studies the different possible types of independence between sets of random variables. We provide proofs of our results in Section VI, and conclude this paper in Section VII.

\section{Submodular (Conditional) Independence: } 
%Two random variables $X$ and $Y$ are statistically independent iff the mutual information $I(X;Y) = 0$. Since we are looking at sets of elements more generally (rather than sets of just random variables), the notion of independence becomes a little more intricate. Consequently, we define three different types of combinatorial (or submodular) independence relations between two sets $A,B \subseteq \Omega$ with respect to a submodular function $f$. These relations also hold in the entropic case and is the statistical independence between two sets of random variables. While defining the relations, we assume that $A$ and $B$ are disjoint without a loss of generality.\footnote{w.l.o.g. since independence implies that $f(A \cap B) = 0$ and as $f$ is assumed to be monotone, non-negative, and submodular, and hence we can remove the elements in $A \cap B$ from the ground set $\Omega$ without affecting any other functional values.} 
%A set $A$ is jointly independent (JI) of another set $B$ (notationally $A \perp_{J} B$) with respect to $f$ iff $I_f(A;B) = 0$ (equivalently $f(A|B) = f(A)$, or $f(B|A) = f(B)$). Next, we say that sets $A$ and $B$ are marginally independent (MI), i.e. $A \perp_M B$ iff $a \perp_J B, \forall a \in A$ and $b \perp_J A, \forall b \in B$. Finally, $A$ and $B$ are pair-wise independent (PI), i.e. $A \perp_P B$ iff $a \perp_J b, \forall a \in A, b \in B$. Unless specified, we will use joint independence as the default notion of independence.
Two random variables $X$ and $Y$ are statistically independent iff the mutual information $I(X;Y) = 0$. Since we are studying sets of variables (rather than two random variables), the notion of independence becomes a little more intricate. Consequently, we define six types of combinatorial (or submodular) independence relations between two sets $A,B \subseteq \Omega$ with respect to a submodular function $f$. These relations also hold in the entropic case, in which case it is exactly the statistical independence between two sets of random variables. Before going into defining the different independence types, we provide some definitions. 

A function $f$ is said to satisfy the condition $\mathcal{M}(S_1,S_2)$ (for any $S_1, S_2\subseteq \Omega$), if $f(\{j\}|X) = f(\{j\})$, $\forall j\in S_1\backslash X$ and $X \subseteq S_2$. This condition implies that the function $f$ is modular with respect to the addition of any element from $S_1$ to any subset of $S_2$ not containing the specific element. This notation is introduced to draw attention towards implications of the difference amongst the definitions of the independence types. Below, we define the six types of independence conditions.   

\begin{enumerate}[topsep=0pt,itemsep=0ex,partopsep=1ex,parsep=1pt,leftmargin=5mm,label=\textbf{\arabic*}.] \label{def:submod-independence-types}
\item Joint Independence ($JI$): We say set $A$ is jointly independent of another set $B$, or $A \perp_{J} B$ with respect to $f$ if $I_f(A;B) = 0$ or equivalently: $f(A|B) = f(A)$ (or $f(B|A) = f(B)$). 
\item Marginal Independence ($MI$): We say set $A$ is marginally independent with another set $B$, or $A \perp_{M} B$ with respect to $f$ if $f(\{a\}|B) = f(\{a\}) \; \forall a \in A$ and $f(\{b\}|A) = f(\{b\}) \; \forall b \in B$. In other words, $A \perp_{M} B$ if $a \perp_J B, \forall a \in A$ and $b \perp_J A, \forall b \in B$
\item Pairwise Independence ($PI$): Set $A$ is pairwise independent of another set $B$, or $A \perp_{P} B$ with respect to $f$ if $\forall a \in A, b \in B, \; f(\{a\}|\{b\}) = f(\{a\})$ and $f(\{b\}|\{a\}) = f(\{b\})$. In other words, $A \perp_{P} B$ if $a \perp_J b, \forall a \in A, b \in B$.
\item Subset Marginal Independence ($SMI$): We say set $A$ holds subset marginal independence with another set $B$, or $A \perp_{SM}B$ with respect to $f$ if
    $f(\{a\}|X) = f(\{a\})$, $\forall a \in A, X \subseteq B$ and  $f(\{b\}|X) = f(\{b\})$, $\forall b \in B, X \subseteq A$. In other words, $A \perp_{SM}B$ if $a \perp_J X, \forall a \in A, X \subseteq B$ and $b \perp_J X, \forall b \in B, X \subseteq A$. Observe that SMI generalizes MI and PI.
\item Modular Independence ($ModI$): Set $A$ holds modular independence with another set $B$, or $A \perp_{Mod}B$ with respect to $f$ if
    $f(\{j\}|X) = f(\{j\}) \; \forall j \in A \cup B \setminus X$ where 
    $X \subseteq A \cup B$. Thus, $f$ satisfies the condition $\mathcal{M}(A\cup B, A\cup B)$.
    \item Subset Modular Independence ($SModI$): Set $A$ is said to be subset modular independent of $B$ or $A \perp_{S-Mod}B$ with respect to $f$ if
    $f(\{j\}|X) = f(\{j\}) \; \forall j \in A \cup B \setminus X$ where 
    $X \subseteq A$ or $X \subseteq B$. Thus, $f$ satisfies both the conditions $\mathcal{M}(A, A\cup B)$ and $\mathcal{M}(B, A\cup B)$
\end{enumerate}
Unless specified, we will use joint independence as the default notion of independence. We also assume that $A$ and $B$ are disjoint without a loss of generality -- since joint independence implies $f(A \cap B) = 0$ and since $f$ is assumed to be monotone, this essentially means that we can remove the elements in $A \cap B$ from the ground set $\Omega$ without letting them affect the functional values.
%\todo{So there are three terms being defined here, "joint independence", "marginal independence", and "pairwise independence", so why also have the hard to remember names type-I, type-II, and type-III? I think it would be better to define only one name for each type, and use the more memorable mnemonic names.} 
%In the next section, we see that the independence types are closely related but not identical. We also present three additional independence types along with their properties \arxivalt{in Appendix~\ref{sec:app-add-types-ind}}{in the extended version~\citep{neurips2020supplemental}}. 

Similar to independence, we also define conditional combinatorial independence between two sets $A$ and $C$ given a third set $B$. We say that $A, C$ are conditionally independent of each other given $B$, in the context of $f$, and denoted by $A \perp_f C \;| \; B$ iff $I_f(A;C | B) = 0$. An equivalent way of viewing this is in terms of the submodular mutual information: $A \perp_f C \;| \; B$ iff $I_f(A; B \cup C) = I_f(A; B)$ and $I_f(C; A \cup B) = I_f(C; B)$. In terms of the conditional gains, it implies that $f(C|B) = f(C | A \cup B)$ and $f(A | B) = f(A | B \cup C)$. Similar to independence, we can also define the six types of conditional independence. 

\section{Connections between Independence Types}

We start this section by providing a relationship between the different types of (conditional) independences. Figure \ref{fig:indep-relations-classes} illustrates the containment relationship between the different types of (conditional) independence. 

\begin{figure}[h]
    \centering
    \includegraphics[width=0.3\textwidth]{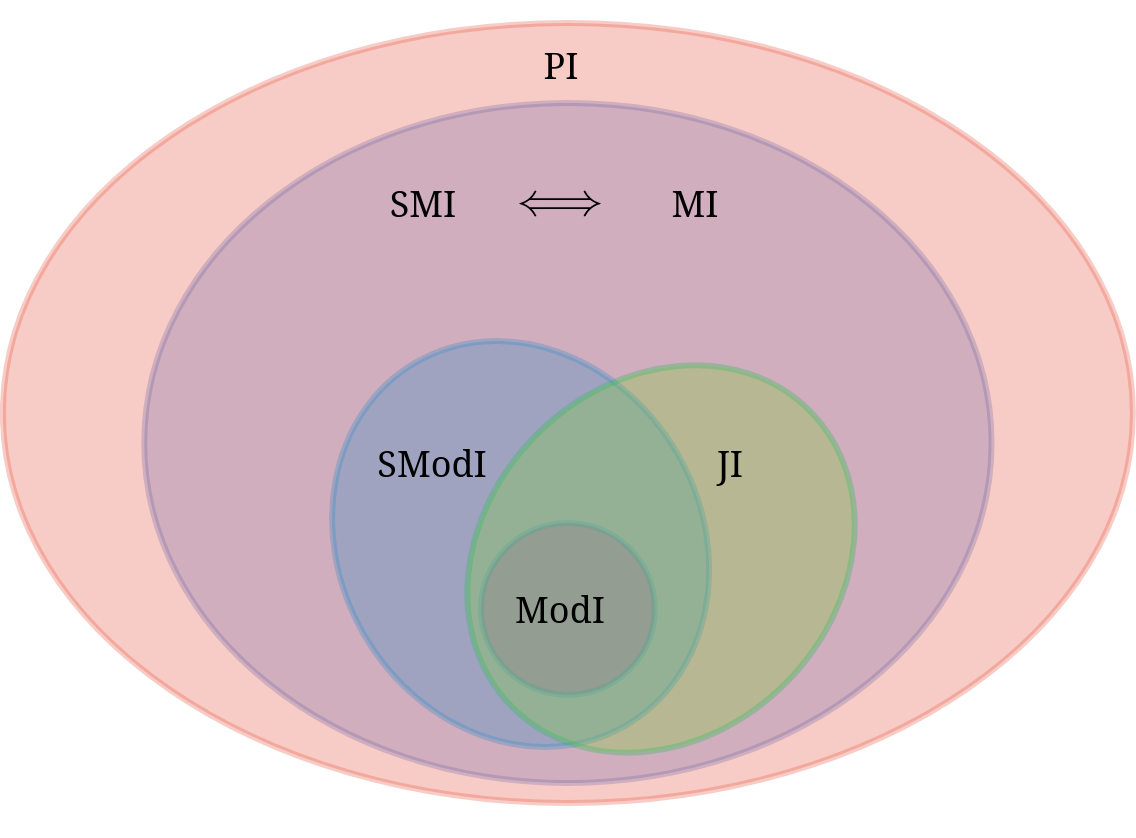}
    \caption{An illustration of the independence relation classes }
    \label{fig:indep-relations-classes}   
\end{figure}

Also, note that we prove the relationships only for the independence case and note that since conditional independence $I_f(A; B | C)$ with $f$ is equivalent to independence with $g(A) = f(A | C)$ for fixed given C, the conditions below also hold for the six types of conditional independence.

\begin{theorem}
\label{theorem:indep-relations}
The following relations hold between the different types of independence defined in the previous section. We use $JI, MI, \dots , SModI$ to denote the different types compactly. 
\begin{align}
   ModI \implies JI \implies MI \iff SMI \implies PI \; \\\text{and} \; ModI \implies SModI \implies MI
\end{align}
Moreover, there exists submodular functions where $JI \centernot\implies ModI$,  $SModI\centernot\implies ModI$,  $JI\centernot\implies SModI$, 
$SModI \centernot\implies JI$,
$MI \centernot\implies SModI$, 
$MI \centernot \implies JI$, 
$PI \centernot\implies MI$.
\end{theorem}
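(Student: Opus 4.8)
The plan is to split the theorem into (i) the forward implication chain and (ii) the seven separation statements, deriving (i) from structural properties of $I_f$ and realizing (ii) with a handful of entropic submodular functions.

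For (i) I would traverse the chain from the strongest notion to the weakest. The implication $ModI \implies JI$ is a direct computation: modularity of $f$ on $A\cup B$ (under the normalization $f(\emptyset)=0$) telescopes to $f(A\cup B)=f(A)+f(B)$ for disjoint $A,B$, whence $I_f(A;B)=0$. For $JI\implies MI$ the key tool is monotonicity of the submodular mutual information in each argument: for $A'\subseteq A$ the difference $I_f(A;B)-I_f(A';B)$ telescopes into a sum of terms $f(a_i\mid S)-f(a_i\mid S\cup B)$, each nonnegative by diminishing returns, so $I_f(\cdot;B)$ is monotone nondecreasing; together with nonnegativity of $I_f$ (itself just the submodular inequality for the disjoint pair), $I_f(A;B)=0$ forces $I_f(\{a\};B)=0$, i.e.\ $f(\{a\}\mid B)=f(\{a\})$, for every $a\in A$, and symmetrically over $B$, which is exactly $MI$. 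The equivalence $MI\iff SMI$ follows by a sandwich argument: $SMI\implies MI$ by taking $X=B$, and $MI\implies SMI$ because for $X\subseteq B$ diminishing returns gives $f(\{a\})=f(\{a\}\mid B)\le f(\{a\}\mid X)\le f(\{a\}\mid\emptyset)=f(\{a\})$, forcing equality throughout. The remaining arrows $SMI\implies PI$, $ModI\implies SModI$, and $SModI\implies MI$ are immediate specializations of the defining set-conditions (take $X=\{b\}$; restrict the admissible $X$ from $A\cup B$ to $A$ or $B$; and take $X=B$ resp.\ $X=A$), which I would state in one line each.

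For (ii) I would use the joint entropy $f(S)=H(X_S)$, which is submodular, and exhibit three systems of $\{0,1\}$-valued variables. \emph{Example R (redundancy):} $X_1,X_3$ independent uniform bits and $X_2=X_1$, with $A=\{1,2\}$, $B=\{3\}$. Here $X_3$ independent of $(X_1,X_2)$ gives $JI$ (hence $MI$), but $H(X_1\mid X_2)=0\ne H(X_1)$ breaks modularity inside $A$, so $SModI$ and $ModI$ both fail; this single example yields $JI\centernot\implies ModI$, $JI\centernot\implies SModI$, and $MI\centernot\implies SModI$. \emph{Example X (parity on four variables):} $X_1,X_2,X_3$ independent uniform bits and $X_4=X_1\oplus X_2\oplus X_3$, with $A=\{1,2\}$, $B=\{3,4\}$. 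The point is that every \emph{single} variable stays uniform after conditioning on any subset lying entirely inside the opposite block (and $X_1\perp X_2$, $X_3\perp X_4$), so $SModI$ holds, while $I_f(A;B)=2+2-3=1\ne 0$ and $H(X_4\mid X_1 X_2 X_3)=0$ show that $JI$ and $ModI$ fail; this yields $SModI\centernot\implies ModI$, $SModI\centernot\implies JI$, and $MI\centernot\implies JI$. \emph{Example P (parity on three variables):} $X_1,X_2$ independent uniform bits and $X_3=X_1\oplus X_2$, with $A=\{1\}$, $B=\{2,3\}$; pairwise independence $I(X_1;X_2)=I(X_1;X_3)=0$ gives $PI$, but $H(X_1\mid X_2,X_3)=0$ shows $f(\{1\}\mid B)\ne f(\{1\})$, so $MI$ fails, giving $PI\centernot\implies MI$.

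The routine part is the forward chain; the main obstacle is the bookkeeping in Example X, where $SModI$ requires checking $f(\{j\}\mid X)=f(\{j\})$ for \emph{every} admissible $X$ — all subsets contained in $A$ and all contained in $B$, for each $j$ — with each check reducing to verifying that a designated parity variable remains uniform after conditioning. I would organize these by symmetry, since the cross-conditionings of a single variable on subsets of the opposite block behave identically up to relabeling, leaving only a couple of genuinely distinct computations, and then record explicitly which of the seven separations each of the three examples discharges so that the full list is covered.
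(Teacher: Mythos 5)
Your proposal is correct and covers every part of the theorem. The forward chain is essentially the paper's own argument: $ModI \implies JI$ by telescoping modular gains, $JI \implies MI$ by a chain decomposition in which submodularity makes the conditioned gains termwise dominated (your phrasing via monotonicity and nonnegativity of $I_f(\cdot\,;B)$ is the same computation, slightly repackaged), the sandwich $f(\{a\}) = f(\{a\}\mid B) \le f(\{a\}\mid X) \le f(\{a\})$ for $MI \iff SMI$, and one-line specializations of the defining conditions for $SMI \implies PI$, $SModI \implies MI$, and $ModI \implies SModI$. Where you genuinely differ is in the separating examples. The paper stays purely combinatorial: a three-element set-cover instance ($\gamma(1)=\{c_1,c_2\}$, $\gamma(2)=\{c_1\}$, $\gamma(3)=\{c_3\}$, $A=\{1,2\}$, $B=\{3\}$) gives $JI \centernot\implies ModI$ and $JI \centernot\implies SModI$; the truncation $f(S)=\min(|S|,k)$ on disjoint sets of size $k-1$ gives $SModI \centernot\implies ModI$ and $SModI \centernot\implies JI$; $f(S)=\min(|S|,2)$ with $|A|=|B|=4$ gives $PI \centernot\implies MI$; and the two remaining separations are deduced by composing implications, exactly as your Examples R and X deliver them directly. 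You instead use entropy throughout (duplicated bit; four-variable parity; three-variable parity) --- which are, in substance, the very examples the paper reserves for Lemma 3.3, its entropy specialization of this theorem. Your route therefore proves the stronger statement that all seven separations hold already within the entropy subclass (so Lemma 3.3 comes for free), at the cost of heavier verification: the $SModI$ check in your Example X requires confirming that each variable remains uniform after conditioning on every admissible subset, which does go through because any three of the four parity variables are mutually independent, whereas the paper's truncation example reduces the analogous check to one line of counting. Both approaches are complete; yours trades verification simplicity for generality of the conclusion.
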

Thus we can see that there is a certain hierarchy attached to the independence types in general which is illustrated in Fig. \ref{fig:indep-relations-classes}.

\subsection{Relationship between the Independence Types for the Modular and Set Cover Function}
For the special case when $f$ is a modular function, all the six independence types are equivalent i.e reverse implications also hold and in fact, this follows from the very definition of a modular function. Next we show that for the Set Cover function the first four types are indeed equivalent.
\begin{lemma} \label{lemma:indep-rel-setcover}
When $f$ is a Set Cover function the relation shown in Theorem \ref{theorem:indep-relations} is not tight. In fact we have that the reverse implications also hold among the first four types, which makes them in essence equivalent to one another: $ModI \implies JI \iff MI \iff SMI \iff PI$. However, $JI \centernot\implies ModI$.
\end{lemma}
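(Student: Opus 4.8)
The plan is to work directly with the standard weighted set cover function. Assign to each ground element $i$ a set $C_i$ in a universe $U$, let $w$ be a nonnegative modular (additive) weight on $U$, and set $f(S) = w\big(\bigcup_{i \in S} C_i\big)$. The first step is to record the gain formula: since $w$ is modular, $f(\{j\}|S) = w\big(C_j \setminus \bigcup_{i \in S} C_i\big)$, and more generally $f(A|B) = w\big(\bigcup_{a \in A} C_a \setminus \bigcup_{b \in B} C_b\big)$. Writing $U_A = \bigcup_{a \in A} C_a$ and $U_B = \bigcup_{b \in B} C_b$, each independence condition then becomes a statement that a certain intersection has zero weight. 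Concretely: $PI$ is equivalent to $w(C_a \cap C_b) = 0$ for all $a \in A, b \in B$; $MI$ is equivalent to $w(C_a \cap U_B) = 0$ for all $a \in A$ together with the symmetric family; and since $I_f(A;B) = w(U_A) + w(U_B) - w(U_A \cup U_B) = w(U_A \cap U_B)$ by inclusion--exclusion, $JI$ is equivalent to $w(U_A \cap U_B) = 0$.

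Given these translations, the two reverse implications needed to collapse the hierarchy, namely $PI \implies MI$ and $MI \implies JI$, both follow from a single observation: subadditivity of the nonnegative modular weight $w$. For $PI \implies MI$ I would write $C_a \cap U_B = \bigcup_{b \in B}(C_a \cap C_b)$, so the union bound gives $w(C_a \cap U_B) \le \sum_{b \in B} w(C_a \cap C_b) = 0$, and nonnegativity forces equality. The identical argument applied to $U_A \cap U_B = \bigcup_{a \in A}(C_a \cap U_B)$ yields $MI \implies JI$. Combined with the chain $JI \implies MI \iff SMI \implies PI$ already established in Theorem \ref{theorem:indep-relations}, these two new implications close the loop and give $JI \iff MI \iff SMI \iff PI$ for set cover functions.

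It remains to exhibit a set cover function witnessing $JI \centernot\implies ModI$. The guiding idea is that $ModI$ demands full modularity of $f$ on $A \cup B$, which is destroyed the moment two covering sets \emph{inside} $A$ (or inside $B$) overlap, whereas $JI$ only constrains the interaction \emph{across} $A$ and $B$. A minimal example takes $U = \{1,2,3,4\}$ with unit weights, $A = \{a_1, a_2\}$, $B = \{b_1\}$, and $C_{a_1} = \{1,2\}$, $C_{a_2} = \{2,3\}$, $C_{b_1} = \{4\}$. Here $U_A \cap U_B = \{1,2,3\} \cap \{4\} = \emptyset$, so $JI$ holds, yet $f(\{a_2\}|\{a_1\}) = w(\{3\}) = 1 \ne 2 = f(\{a_2\})$, so the modularity required by $ModI$ fails.

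The computations are essentially mechanical, so I do not expect a serious obstacle; the one point requiring care is the \emph{direction} of the subadditivity inequality. One must verify that the union bound bounds the ``larger'' intersection weight above by a sum of the pairwise (or marginal) ones, so that all of them vanish together, and that it is nonnegativity of $w$, rather than mere submodularity of $f$, that converts the inequality into the needed equality. The counterexample itself is easy to produce once one notices that $ModI$ is sensitive to intra-set overlaps that $JI$ is structurally unable to detect.
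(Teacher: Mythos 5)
Your proposal is correct and takes essentially the same approach as the paper: both translate the independence conditions for set cover into (weight-)disjointness statements about the covers $\gamma(A)$ and $\gamma(B)$, close the implication cycle with the union-bound observation that pairwise disjoint covers force $\gamma(A)\cap\gamma(B)=\emptyset$, and refute $JI \implies ModI$ via an intra-set overlap counterexample (the paper reuses the set-cover instance from the proof of Theorem \ref{theorem:indep-relations}, which is structurally identical to yours). Your two-step chain $PI \implies MI \implies JI$ and your careful handling of zero-weight elements are minor refinements of the paper's direct $PI \implies JI$ argument, not a different method.
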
 

\subsection{Relationship between the Independence Types for the Entropy Function}
As we have noted earlier, Entropy function is generalized by our framework. This is done via defining the universal set of random variables $\Omega= \{X_1, X_2, \cdots, X_k\}$ with the corresponding joint probability distribution $\mathcal{P}_{\Omega}$ on the random variables. Thus for $A\subseteq \Omega$ we have well defined submodular function, entropy $H(A)=H(X_A)$, where $X_A=\{X_a, a\in A\}$. The usual celebrated notion of independence is that of $H(A|B)=H(A) \iff P_{X_A,X_B}=P_{X_A}P_{X_B}$, or the set of random variables $X_A$ are independent of the set of random variables $X_B$. This is the $JI$ notion of independence as per our exposition on the different types of independence. However with the above definitions of six types of independence in the context of generalized submodular information, we have a richer set of notions of independence and hierarchies thereof for the entropy function. 
The lemma below further elaborates on the relations of different types of independence for the entropy function. 

\begin{lemma}\label{lemma:app-ent-ind-rel}
For the entropy function:
\begin{align}
   ModI \implies JI \implies MI \iff SMI \implies PI \; \\\text{and} \; ModI \implies SModI \implies MI
\end{align}
Moreover, there exists tuple, $(\Omega, \mathcal{P}_{\Omega})$ where $JI \centernot\implies ModI, \; SModI\centernot\implies ModI, \; JI\centernot\implies SModI, \;
SModI \centernot\implies JI, \;
MI \centernot\implies SModI, \;
MI \centernot \implies JI, \;
PI \centernot\implies MI$,
\end{lemma}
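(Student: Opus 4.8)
The plan is to separate the statement into its forward implications and its list of separating examples. The forward implications are word-for-word those of Theorem \ref{theorem:indep-relations}, and since the entropy function $H$ is monotone and submodular, they follow immediately from that theorem applied to $f = H$. In particular, the equivalence $MI \iff SMI$ is the instance recorded there; for entropy it simply reflects that conditioning cannot increase entropy, so $H(\{a\}|B) = H(\{a\})$ forces $H(\{a\}|X) = H(\{a\})$ for every $X \subseteq B$. Thus the substance of the lemma is to exhibit joint distributions $\mathcal{P}_\Omega$ realizing each non-implication, and the crux is that these witnesses must be genuinely \emph{entropic}, not merely submodular.

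First I would translate the six notions into statements about $X_A$ and $X_B$. By telescoping joint entropies along chains, and using that single-vs-block independence implies single-vs-subset independence, one checks for entropy that: $ModI$ means all variables indexed by $A\cup B$ are mutually independent; $JI$ means the blocks $X_A$ and $X_B$ are independent; $MI\ (=SMI)$ means each single variable is independent of the opposite block; $SModI$ means additionally that the variables inside $A$ are mutually independent, the variables inside $B$ are mutually independent, and each single variable is independent of the opposite block, though the two blocks need not be jointly independent; and $PI$ means independence in each cross pair.

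Next I would give explicit binary witnesses. (a) Take $A=\{1,2\}$, $B=\{3\}$ with $X_1=X_2$ a uniform bit and $X_3$ an independent uniform bit: here $X_A \perp X_B$ and every single variable is independent of the opposite block, so $JI$ and $MI$ hold, yet $X_1,X_2$ are dependent, so $SModI$ and hence $ModI$ fail; this realizes $JI\centernot\implies ModI$, $JI\centernot\implies SModI$, and $MI\centernot\implies SModI$. (b) Take $A=\{1,2\}$, $B=\{3,4\}$ with $X_1,X_2,X_3$ i.i.d.\ uniform bits and $X_4=X_1\oplus X_2\oplus X_3$: a direct check shows every variable is independent of the opposite block and the variables inside each block are independent, so $SModI$ (hence $MI$) holds, while $I(X_A;X_B)=H(X_A)+H(X_B)-H(X_{A\cup B})=2+2-3=1>0$, so $JI$ and a fortiori $ModI$ fail; this realizes $SModI\centernot\implies JI$, $SModI\centernot\implies ModI$, and $MI\centernot\implies JI$. (c) Take $A=\{1\}$, $B=\{2,3\}$ with $X_1,X_2$ i.i.d.\ uniform bits and $X_3=X_1\oplus X_2$: the three variables are pairwise independent, so $PI$ holds, but $X_1$ is a function of $(X_2,X_3)$, so $X_1$ is not independent of $X_B$ and $MI$ fails, realizing $PI\centernot\implies MI$.

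I expect the main obstacle to be conceptual rather than computational: unlike Theorem \ref{theorem:indep-relations}, where one may hand-design submodular rank functions, here every separating function must equal the entropy of an honest joint distribution, and entropic functions form a strict subclass of monotone submodular functions. The decisive idea is that the XOR (parity) constructions produce variables that are independent in every proper sub-collection yet dependent jointly, which is exactly the gap needed to separate $PI$ from $MI$ and $MI/SModI$ from $JI/ModI$ while remaining inside the entropy region. Once these three distributions are fixed, the remaining steps are the routine entropy bookkeeping sketched above, together with citing Theorem \ref{theorem:indep-relations} for all forward implications.
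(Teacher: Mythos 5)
Your proposal is correct and follows essentially the same route as the paper: the forward implications are inherited from Theorem~\ref{theorem:indep-relations} since entropy is submodular, and the separations come from the same parity/XOR constructions (the four-variable $X_4 = X_1 \oplus X_2 \oplus X_3$ distribution for $SModI \centernot\implies ModI, JI$ and the three-variable XOR for $PI \centernot\implies MI$), with only cosmetic differences — you use $X_1 = X_2$ in place of the paper's partially correlated pair for $JI \centernot\implies ModI, SModI$, and you verify $MI \centernot\implies SModI$ and $MI \centernot\implies JI$ directly from the same examples rather than by the paper's implication-chaining, which is logically identical.
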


 \subsection{Miscellaneous Results}
Next, we provide sufficient but not necessary conditions for submodular (conditional) independence to hold.
\begin{lemma} \label{lemma:cond-indep-basic}
For a submodular $f$, $A \perp_f \emptyset$. Also, $A \perp_f C \;|\;B$ if $A \subseteq B$ (and $C$ is any set) or $C \subseteq B$ (and $A$ is any set). Moreover, $A \perp_f C \;|\; B \implies I_f(A;C) \leq I_f(A;B), I_f(A;C) \leq I_f(C;B)$. 
\end{lemma}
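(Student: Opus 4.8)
The plan is to handle the three assertions separately, in each case reducing to the definition $I_f(A;C\mid B)=f(A\mid B)+f(C\mid B)-f(A\cup C\mid B)$ and expanding the conditional gains through $f(X\mid B)=f(X\cup B)-f(B)$, which rewrites the conditional mutual information purely in unconditioned terms as $I_f(A;C\mid B)=f(A\cup B)+f(C\cup B)-f(A\cup C\cup B)-f(B)$. For the first claim, $A\perp_f\emptyset$ is just $I_f(A;\emptyset)=f(A)+f(\emptyset)-f(A)=f(\emptyset)$, which vanishes under the standing normalization $f(\emptyset)=0$. For the second claim, I would substitute the two hypotheses directly into the unconditioned form above: if $A\subseteq B$ then $A\cup B=B$ and $A\cup C\cup B=C\cup B$, so the four terms cancel in pairs and $I_f(A;C\mid B)=0$; the case $C\subseteq B$ is symmetric. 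Both are one-line calculations requiring nothing beyond set identities.

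The third claim is the substantive one. For it I would first establish the chain rule $I_f(X;Y\cup Z)=I_f(X;Y)+I_f(X;Z\mid Y)$, which is a direct (if slightly tedious) cancellation once both sides are expanded into unconditioned $f$-values. Applying this with $(X,Y,Z)=(A,C,B)$ and with $(X,Y,Z)=(A,B,C)$ gives two expressions for the same quantity $I_f(A;B\cup C)$, and equating them yields $I_f(A;C)+I_f(A;B\mid C)=I_f(A;B)+I_f(A;C\mid B)$. Using the hypothesis $I_f(A;C\mid B)=0$ (equivalently, $I_f(A;B\cup C)=I_f(A;B)$, the characterization noted just after the definition of conditional independence) collapses this to the key identity $I_f(A;B)-I_f(A;C)=I_f(A;B\mid C)$. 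The desired inequality $I_f(A;C)\le I_f(A;B)$ then follows provided $I_f(A;B\mid C)\ge 0$. The second inequality $I_f(A;C)\le I_f(C;B)$ is obtained the same way after swapping the roles of $A$ and $C$ and using the symmetric characterization $I_f(C;A\cup B)=I_f(C;B)$, which gives $I_f(C;B)-I_f(C;A)=I_f(C;B\mid A)\ge 0$.

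The one external ingredient is the non-negativity of the conditional submodular mutual information, $I_f(X;Y\mid Z)\ge 0$, and this is exactly where monotonicity enters. I would justify it by noting that $g(\cdot):=f(\cdot\mid Z)=f(\cdot\cup Z)-f(Z)$ is again submodular and monotone with $g(\emptyset)=0$ automatically, so $I_f(X;Y\mid Z)=I_g(X;Y)=g(X)+g(Y)-g(X\cup Y)\ge g(X\cap Y)\ge 0$, the first inequality being submodularity of $g$ and the second its monotonicity (and immediate for the disjoint $X,Y$ we may assume). I expect the main obstacle to be purely bookkeeping: verifying the chain-rule cancellation cleanly and invoking the equivalence between $I_f(A;C\mid B)=0$ and $I_f(A;B\cup C)=I_f(A;B)$ in the correct direction, rather than any conceptual difficulty. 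All three parts ultimately rest on expanding gains and on the already-cited non-negativity of $I_f$.
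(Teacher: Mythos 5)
Your proof is correct and takes essentially the same route as the paper's: normalization for the first claim, direct substitution and cancellation for the second, and, for the third, the observation that the hypothesis gives $I_f(A;B\cup C)=I_f(A;B)$ together with monotonicity of $I_f$ in one argument (your identity $I_f(A;B)-I_f(A;C)=I_f(A;B\mid C)\ge 0$ is exactly that monotonicity statement, rearranged). The only difference is that the paper cites this monotonicity of $I_f$ as a known property of monotone submodular functions, whereas you derive it yourself via the chain rule and non-negativity of the conditional mutual information --- a harmless, more self-contained elaboration.
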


\begin{proof}
We first prove that $A \perp_f \emptyset$. This follows from definition since $I_f(A; \emptyset) = f(A) + f(\emptyset) - f(A) = 0$ if $f$ is normalized. Next, if $A \subseteq B$, $A \cup B = B$ and hence $f(C | B) = f(C | A \cup B)$. Similarly, $f(A | B) = F(A | B \cup C) = 0$ and hence $I_f(A; B | C) = 0$. The same proof holds for thye case if $C \subseteq B$.

Finally, we show that $I_f(A; C) \leq I_f(A; B)$ (the case of $I_f(A; C) \leq I_f(C; B)$ follows from a symmetric argument). Recall that $I_f(A; C | B) = 0$ implies that $I_f(A; B \cup C) = I_f(A; B)$. This implies that $I_f(A; C) \leq I_f(A; B \cup C) = I_f(A; B)$. The first inequality follows from the monotonicity of $I_f$ in one argument given the other. Hence proved.
\end{proof}

The last result in Lemma~\ref{lemma:cond-indep-basic} is similar to the classical data processing inequality, but defined on sets of variables. Given that the sets $A \to B \to C$ form a Markov chain\footnote{This is in the usual sense of Markovicity, that knowing $B$, $A$ is fully determined irrespective of $C$ and so forth}, it implies that the conditional mutual information $I_f(A; C | B) = 0$. Furthermore, let $C = P(B)$ be a processing operator ($P: 2^{\Omega} \to 2^{\Omega}$). If the processing involves taking subsets $C \subseteq B$, this implies that $I_f(A; C | B) = 0$ (from Lemma~\ref{lemma:cond-indep-basic}). For specific sub-classes of functions, the processing can be more interesting. For example, in the case of set cover, let $\gamma^{-1}$ be the inverse operator such that given a concept $u \in U, \gamma^{-1}(u) \subseteq \Omega$
 gives the subset of $\Omega$ such that $\forall c \in \gamma^{-1}(u), u \in \gamma(c)$. Similarly, given a subset $C_U \subseteq U$, we can define $\gamma^{-1}(C_U)$. Then given sets $A, B \subseteq \Omega$, let $B_U \subseteq \gamma(B) \subseteq U$ be any subset. Then the sets $A \to B \to \gamma^{-1}(B_U)$ form a Markov chain with the property that $I_f(A; \gamma^{-1}(B_U) | B) = 0$. 
 
 Finally, we give some examples of properties which do not hold for combinatorial independence. 
 \begin{lemma}
 Suppose $A, B, C$ are subsets such that $A \perp_f B$ and $A \perp_f C$. This does not however, imply that $A \perp_f B \cup C$. 
 \end{lemma}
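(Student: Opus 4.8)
The plan is to disprove the implication by exhibiting a single explicit counterexample; since the statement asserts that the implication \emph{fails}, one well-chosen instance of $f, A, B, C$ suffices. I would work in the entropic special case, both because joint independence there coincides with ordinary statistical independence and because the phenomenon I want---pairwise-but-not-mutual independence---is a textbook feature of statistical independence. Recall that $A \perp_f B$ unfolds, via the equivalent characterization in the $JI$ definition, to $f(A\cup B) = f(A) + f(B)$, so I need values of the entropy for which this additive law holds for the pairs $(A,B)$ and $(A,C)$ but breaks for the pair $(A, B\cup C)$.

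Concretely, I would take $\Omega = \{X_1, X_2, X_3\}$ with $X_1, X_2$ independent uniform bits and $X_3 = X_1 \oplus X_2$ their parity, and set $A = \{X_1\}$, $B = \{X_2\}$, $C = \{X_3\}$, with $f = H$ the joint entropy. The first step is to check the two hypotheses: $X_3$ is itself a uniform bit (so $H(C) = 1$), and each of the pairs $(X_1, X_2)$ and $(X_1, X_3)$ is jointly uniform on $\{0,1\}^2$, giving $H(A\cup B) = H(A\cup C) = 2 = H(A) + H(B) = H(A) + H(C)$; hence $A \perp_f B$ and $A \perp_f C$.

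The second step is to show the conclusion fails. Here the three variables are functionally dependent---any two determine the third---so $H(A \cup B \cup C) = 2$, while $(X_2, X_3)$ is again jointly uniform so $H(B\cup C) = 2$. Then $I_f(A; B\cup C) = H(A) + H(B\cup C) - H(A\cup B\cup C) = 1 + 2 - 2 = 1 \neq 0$, so $A$ is not jointly independent of $B\cup C$, completing the counterexample.

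I do not anticipate a genuine obstacle: the construction is the classical parity example, and the only point requiring care is the bookkeeping of the joint entropies---in particular verifying that each relevant pair of variables is jointly uniform (four equally likely outcomes) and that the full triple collapses to entropy $2$ because of the XOR constraint. If a non-entropic witness were desired instead, one could alternatively seek a set-cover or matroid-rank function realizing the same additive pattern, but the entropic example is the cleanest and ties directly into Lemma~\ref{lemma:app-ent-ind-rel}.
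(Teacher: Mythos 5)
Your proof is correct, but it takes a different route from the paper's. The paper disproves the implication with the truncated matroid rank function $f(S) = \min(|S|, k)$, taking $A, B, C$ mutually disjoint of size $k/2$: then $f(A \cup B) = f(A) + f(B)$ and $f(A \cup C) = f(A) + f(C)$, yet $I_f(A; B \cup C) = k/2 + k - k = k/2 > 0$. You instead work in the entropic special case with the classical parity construction ($X_1, X_2$ independent uniform bits, $X_3 = X_1 \oplus X_2$), and your entropy bookkeeping is all correct: each pair is jointly uniform, so both hypotheses hold, while $I_H(A; B \cup C) = 1 + 2 - 2 = 1 \neq 0$. Both counterexamples are legitimate since the lemma only asserts failure of the implication for some submodular $f$. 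The paper's matroid-rank example is arguably more elementary (pure counting, no probability) and comes with a free scale parameter $k$; it also reuses the same function the paper employs for several other separations, keeping the toolkit minimal. Your entropic example has a different virtue: it shows the failure already occurs for statistical independence of random variables, which is the strongest possible instantiation of the framework, and it connects directly to the pairwise-versus-mutual independence phenomenon exploited in Lemma~\ref{lemma:app-ent-ind-rel} (indeed the paper uses this very XOR distribution there to show $PI \centernot\implies MI$). So nothing is missing from your argument; it is simply a different, and somewhat richer, witness than the one the paper chose.
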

\begin{proof}
 To see this, again, define $f(A) = \min(|A|, k)$ and let $|A| = |B| = |C| = k/2$ with $A,B,C$ being mutually disjoint. Note that $A \perp_f B$ and $A \perp_f C$. Lets study $I_f(A; B \cup C) = f(A) + f(B \cup C) - f(A \cup B \cup C) = 3k/2 - k = k/2 > 0$. This means that $A$ is not jointly independent of $B \cup C$. This is a good segway to the next subsection, which studies independence among $k$-sets.
 \end{proof}

 \section{Multi-Set Submodular Independence}

 In this section, we first introduce multi-set submodular independence. In particular, we introduce two concepts of independence among $k$ sets. 
\begin{itemize}
    \item Sets $A_1, \cdots, A_k$ are mutually independent iff $f(\cup_i A_i) = \sum_i f(A_i)$.
    \item Sets $A_1, \cdots, A_k$ are pairwise independent iff $A_i \perp_J A_j, \forall i, j$.
\end{itemize}
 Next, we study the connection between mutual and pairwise independence of sets $A_1, \cdots, A_k$. We first define the multi-set total correlation $C_f$ as~\cite{iyer2021submodular}:
 \begin{align}
     C_f(A_1; \cdots; A_k) = \sum_{i = 1}^k f(A_i) - f(\cup_{i = 1}^k A_i)
 \end{align}
 Next, note that the mutual submodular independence between sets $A_1, \cdots, A_k$ means that $C_f(A_1; \cdots; A_k) = 0$. On the other hand, pairwise independendence implies that for all pairs, $A_i \perp_J A_j$. Both types of independence are again, w.l.o.g. defined on disjoint sets of items. The following result connects the two types of independences.
 \begin{lemma}
 Given a monotone, non-negative and normalized submodular function $f$, mutual independence implies pairwise independence. However, pairwise independence does not imply mutual independence.
 \end{lemma}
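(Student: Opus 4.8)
The plan is to leverage the observation that monotonicity and non-negativity make $f$ subadditive, so that the total correlation $C_f$ is always non-negative and mutual independence is exactly the equality (extreme) case $C_f = 0$; I would then argue that this single global equality forces every pairwise slack to collapse. First I would record subadditivity: for any $S,T$, submodularity gives $f(S \cup T) + f(S \cap T) \le f(S) + f(T)$, and since $f$ is non-negative and normalized we have $f(S \cap T) \ge f(\emptyset) = 0$, hence $f(S \cup T) \le f(S) + f(T)$. Iterating this over the union of the $A_i$ yields $f(\bigcup_i A_i) \le \sum_i f(A_i)$, i.e. $C_f(A_1; \dots; A_k) \ge 0$ always, with equality precisely under mutual independence.

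To extract pairwise independence, I would fix a pair $i \ne j$, set $P = A_i \cup A_j$ and $Q = \bigcup_{\ell \ne i,j} A_\ell$, and run the chain
\[ f\!\left(\bigcup_\ell A_\ell\right) \;\le\; f(P) + f(Q) \;\le\; \big(f(A_i) + f(A_j)\big) + \sum_{\ell \ne i,j} f(A_\ell) \;=\; \sum_\ell f(A_\ell), \]
using subadditivity first on the split $P \cup Q$, and then separately within $P$ and within $Q$. Under mutual independence the two ends of this chain coincide, so every inequality in it must be tight. In particular the first split is tight, giving $f(P) + f(Q) = \sum_\ell f(A_\ell)$; since $f(P)$ and $f(Q)$ are each already bounded above by their respective partial sums, whose total is exactly $\sum_\ell f(A_\ell)$, each bound must be attained individually. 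Hence $f(A_i \cup A_j) = f(A_i) + f(A_j)$, which is exactly $A_i \perp_J A_j$, and since $i,j$ were arbitrary this establishes pairwise independence.

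For the converse I would supply a counterexample reusing the saturating function from the preceding lemma. Take $f(S) = \min(|S|, c)$, which is concave in cardinality and therefore monotone, normalized, non-negative, and submodular, together with three pairwise-disjoint sets $A_1, A_2, A_3$ each of cardinality $s$, and set $c = 2s$. Every pair then satisfies $f(A_i \cup A_j) = \min(2s, 2s) = 2s = f(A_i) + f(A_j)$, so the sets are pairwise independent; yet $f(A_1 \cup A_2 \cup A_3) = \min(3s, 2s) = 2s < 3s = \sum_i f(A_i)$, so $C_f > 0$ and mutual independence fails.

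I expect the only genuinely delicate point to be the tightness-propagation step, namely arguing that equality at the two ends of the inequality chain forces each individual subadditivity inequality — in particular the one isolating the pair $\{i,j\}$ — to hold with equality; everything else is routine bookkeeping with subadditivity.
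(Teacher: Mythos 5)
Your proof is correct and takes essentially the same route as the paper: both rest on subadditivity of $f$ and on splitting the union into the pair $A_i \cup A_j$ versus the remaining sets (your direct tightness-propagation argument is just the contrapositive of the paper's proof by contradiction), and your counterexample $f(S)=\min(|S|,c)$ with disjoint sets of size $c/2$ is the same one the paper uses. The step you flagged as delicate is sound: if $a \le a'$, $b \le b'$ and $a+b = a'+b'$, then $a=a'$ and $b=b'$, so the pairwise equality $f(A_i \cup A_j)=f(A_i)+f(A_j)$ indeed follows.
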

 \begin{proof}
 Lets prove that mutual independence implies pairwise independence. To prove this, lets assume that sets $A_1, \cdots, A_k$ are mutually independent, but not pairwise independent. That is, there exist two sets which are not joint independent. W.l.o.g, lets assume they are $A_1, A_2$. In other words, $f(A_1 \cup A_2) < f(A_1) + f(A_2)$. This implies (following from submodularity):
 \begin{align}
     f(\cup_i A_i) \leq f(A_1 \cup A_2) + \sum_{i = 3}^k f(\cup_i A_i)
 \end{align}
 Now invoking the assumption that $f(A_1 \cup A_2) < f(A_1) + f(A_2)$, this means that:
  \begin{align}
     f(\cup_i A_i) < f(A_1) + f(A_2) + \sum_{i = 3}^k f(\cup_i A_i)
 \end{align}
 which means that $f(\cup_i A_i) < \sum_{i = 1}^k f(\cup_i A_i)$ which contradicts mutual independence. Hence, given mutual independence, this must imply pairwise independence. However, pairwise independence does not imply mutual independence. We prove this with an example. Again, define $f(A) = \min(|A|, c)$ and let $|A_1| = |A_2| = \cdots = |A_k| = c/2$ be mutually disjoint sets. Note that all pairs $A_i, A_j$ are pairwise independent. However, they are not mutually independent since $f(\cup_i A_i) = c < \sum_i f(A_i) = kc/2$ for $k > 2$.
 \end{proof}
 Finally, we consider an alternative to mutual independence, which is $I_f(A_1; \cdots; A_k) = 0$ instead of $C_f(A_1; \cdots; A_k) = 0$. Recall that the multi-set mutual information $I_f$ is defined as:
 \begin{align}
 I_f(A_1; A_2;\dots; A_k) = -\sum_{T \subseteq [k]} (-1)^{|T|} f(\cup_{i \in T} A_i). 
\end{align}
 Note that $I_f$ and $C_f$ are \emph{different} quantities and \textbf{they will in general provide different conditions for the mutual independence between sets $A_1, \cdots, A_k$.} However, the condition that the multi-set submodular mutual information is zero is unfortunately not very interesting. To understand this, we look at it via the example of the set cover function.
 \begin{example}
 If $f(A) = \gamma(A)$, the multi-set submodular mutual information being zero is equivalent to $\cap_{i = 1}^k \gamma(A_i) = \emptyset$. This is, however, not very interesting since even if any two sets $A_1$ and $A_2$ are jointly independent, they will satisfy $\gamma(A_1) \cap \gamma(A_2) = \emptyset$ and hence it will hold that $\cap_{i = 1}^k \gamma(A_i) = \emptyset$. This means that even if only two sets are mutually independent, but the rest of the sets are completely dependent (or even identical), the multi-set submodular mutual information will still be zero.
 \end{example}
 
 \section{The Utility of Independence Characterizations}
Here, we present some discussion on the applications and utility of our formulations. In several data subset selection applications (e.g. video/image collection summarization~\cite{kaushal2020unified,kaushal2021prism,tschiatschekLearningMixturesSubmodular2014a}, data selection for efficient training~\cite{wei2015submodularity, killamsetty2020glister,killamsetty2021grad}, and active learning~\cite{ash2019deep,killamsetty2020glister,wei2015submodularity}, submodular functions have been shown to be a natural fit. In particular, many of these problems involve optimizing a submodular function under constraints, such as cardinality, knapsack, and matroid constraints~\cite{iyer2015submodular,tohidi2020submodularity}. 

Submodular Independence discussed in this work, can be viewed as a new class of combinatorial constraints, and we will refer to this constraint as $A \perp_f P$. In particular, we can then consider optimization problem as:
\begin{align}
    \max_{A \subseteq V} g(A), \mbox{s.t. } A \perp_f P
\end{align}
for a given set $P$. This has several natural applications. The first is privacy preserving summarization~\cite{kaushal2020unified,kaushal2021prism} where we want to select a subset $A$ which is as different as possible from set $P$ (and this difference is measured with respect to the submodular function). For instance this private set could be of one’s personal picture collection or medical data, or could be images of one’s family. The independence equates to lack of discernment of any information contained in $P$ by knowing $A$. If the independence considered here is JI, then this is equivalent to $I_f(A; P) = 0$, which we can equivalently relax to $I_f(A; P) \leq \epsilon$ for a very small $\epsilon$.  We can also consider the different types of independence here since the constraint $I_f(A; P) \leq \epsilon$ may not be amenable to tractable optimization algorithms when $I_f(A; P)$ is not submodular in A for a fixed $P$ (which is true for several important classes of submodular functions). In such cases, we can relax this to SMI or MI instead of JI, which requires that $I_f(a; P) = 0, \forall a \in A$ and $I_f(A; p) = 0. \forall p \in P.$ 

If $f$ is second-order supermodular~\cite{iyer2021submodular}, then the problem of maximizing $g(A)$ subject to a constraint that $I_f(A; P) \leq \epsilon$ is an instance of SCSK~\cite{iyer2013submodularScsk,iyer2013fast} which admits bounded approximation guarantees. For general $f$ though, achieving such a set could be NP hard.  On the other hand, the independence characterizations for MI (equivalently SMI) and PI are much easier. In particular, for MI, we just need to find elements $a \in A$ such that $f(a | B) = 0$. Similarly, for PI, we find elements $a \in A$ such that $f(a | b) = 0, \forall b \in B$. 

 \section{Proofs of Results in Section III}
%Here we prove Theorem~\ref{theorem:indep-relations}, Lemma~\ref{lemma:app-ent-ind-rel} and Lemma~\ref{lemma:indep-rel-setcover}.
Proof of Theorem~\ref{theorem:indep-relations}.
 \begin{proof}
We start with $JI \implies MI$. $JI$ implies that $f(A | B) = f(A)$. Let $A = \{a_1, \cdots, a_k\}$ be the elements in $A$, and define a chain of sets corresponding to the above ordering as $A = C_0 \subseteq C_1 \subseteq \cdots \subseteq C_k = A \cup B$. The set $C_i = A \cup \{a_1, \cdots, a_i\}$. Similarly, consider the chain of sets $\emptyset = D_0 \subseteq D_1 \subseteq \cdots \subseteq D_k = A$ with $D_i = \{a_1, \cdots, a_i\}$. Then note that $f(A | B) = \sum_{i =1}^{k} f(a_i | C_{i-1})$ and $f(A) = \sum_{i = 1}^k f(a_i | D_{i-1})$. Since $f(A | B) = f(A)$, this implies that $\sum_{i =1}^{k} f(a_i | C_{i-1}) = \sum_{i = 1}^k f(a_i | D_{i-1})$ . Note that $\sum_{i =1}^{k} f(a_i | C_{i-1}) \leq \sum_{i = 1}^k f(a_i | D_{i-1}), \forall i$ and hence this implies that for each $i$, $f(a_i | C_{i-1}) = f(a_i | D_{i-1})$. Setting $i = 1$, we have $f(a_i | B) = f(a_i)$. Note that since the order $a_1, \cdots, a_k$ can be arbitrary, we have $f(a | B) = f(a), \forall a \in A$.
 
Next, we show that $MI \iff SMI$. For this, first observe that  $MI \implies SMI$ by submodularity. From $MI$ we have, $f(\{a\}|B) = f(\{a\}) \; \forall a \in A$ and $f(\{b\}|A) = f(\{b\}) \; \forall b \in B$. Also with submodularity we have: $f(\{a\}|B) \leq f(\{a\}|X) \leq f(\{a\}) \; \forall X \subseteq B, a \in A$. But $f(\{a\}|B) = f(\{a\}) \implies f(\{a\}|X) = f(\{a\})$. Similar proof works for showing the other case with $B$.
For $SMI \implies MI$, we can just use $X = B$ (since we trivially have $B \subseteq B$). Similarly for the other case, use $X = A$.

We have $SMI \implies PI$ by just substituting in the singleton set as: $X = \{b\}, \; (b \in B)$ in $PI$ and it follows through trivially since $\{b\} \subseteq B \; \forall b \in B$. Similarly we can also substitute $X = \{a\}, \; (a \in A)$ for showing the other case.

For $ModI \implies JI$, recall that $ModI$ implies that $f(j | X) = f(j), \forall X \subseteq A \cup B$ and$j\notin A\cup B \backslash X$. Let $A = \{a_1, a_2, ..., a_k \}$ and $B = \{b_1, b_2, ..., b_k \}$. Furthermore, define $A_l = \{a_1, \cdots, a_l\}, l \leq k$ and $B_l = \{b_1, \cdots, b_l\}, l \leq k$. Then, $f(A) = \sum_{i = 1}^k f(a_i | A_{i-1}) = \sum_{i=1}^{k} f(a_i)=\sum_{j\in A} f(A)$. Similarly we obtain $f(B) = \sum_{j\in B} f(j), f(A \cup B) = \sum_{j \in A \cup B} f(i)$ and clearly, this satisfies $f(A) + f(B) = f(A \cup B)$ when $A$ and $B$ are disjoint. This shows that $ModI \implies JI \implies MI \iff SMI \implies PI$. 
%{\color{blue} I made some changes in the paragraph above, please check}

We then show $SModI \implies MI$. $SModI$ implies $f(j | X) = f(j), \forall j \in A \cup B \backslash X$ and $X \subseteq A$ or $X \subseteq B$. Setting $X = A$ and $X = B$, we get $T2$. Also, $ModI \implies SModI$ since $SModI$ only requires $X \subseteq A$ and $ X \subseteq B$ while $ModI$ requires $X \subseteq A \cup B$. Finally, we show the reverse implications do not hold with examples. We start with $JI \centernot\implies ModI$. Let $f(A) = |\gamma(A)|$ and $A, B$ be sets such that $\gamma(A) \cap \gamma(B) = \emptyset$. Concretely, let $\Omega = \{1,2,3\}$, $\gamma(1) = \{c_1,c_2\}, \gamma(2) = \{c_1\}, \gamma(3) = \{c_3\}$, and $A = \{1,2\}, B = \{3\}$. Note that sets $A$ and $B$ are $JI$ independent since $f(A \cup B) = 3 = f(A) + f(B)$. However, it is not $ModI$ independent since let $X = \{1\}$, then $f(\{2\} | X) = 0 \neq f(\{2\}) = 1$. 

Next, we show that $SModI \centernot\implies ModI$. Let $f(A) = \min(|A|, k)$ with $A$ and $B$ being disjoint sets of size $k-1$. Then, $A$ and $B$ are $SModI$-independent since for all $X \subseteq A$ or $B$, $f(j | X) = f(j)$ (the function is modular if we only add a single element). However, they are not $ModI$ independent since we can set $X = A \cup \{b_1\}$ for a specific $\{b_1\} \in B$, and then selecting a $j \neq b_1 \in B$, we have that $f(j | X) = 0 \neq f(j)$. We then show that $JI \centernot\implies SModI$ and $SModI \centernot\implies JI$. To show $JI \centernot\implies SModI$, we use the set-cover function used earlier. The specific sets $A$ and $B$ are $JI$ independent however they are not $SModI$ independent since let $X = \{1\} \subseteq A$ and $j = \{2\}$. Then $f(\{2\} | \{1\}) = 0 \neq f(\{2\} = 1$. To show that $SModI \centernot\implies JI$, again use the matroid rank function $f(A) = \min(|A|, k)$ with $A$ and $B$ being disjoint sets of size $k-1$. Notice that $A$ and $B$ are $SModI$ independent. However, $f(A \cup B) = k \neq f(A) + f(B) = 2(k-2)$ and hence they are $JI$ independent.

Next, we show that $MI \centernot\implies SModI$ and $MI \centernot\implies JI$. $MI \centernot\implies SModI$ follows since $JI \implies MI$ and $JI \centernot\implies SModI$. Hence there exists a $f$, and sets $A, B$ which are $JI$ independent (and hence $MI$ independent) but not $SModI$ independent. Similarly, $MI \centernot\implies JI$ follows since $SModI \implies MI$ and $SModI \centernot\implies JI$. Finally, we show that $PI \centernot\implies MI$. Again, let $f(A) = \min\{|A|, 2\}$ with $|A| = |B| = 4$ and $A \cap B = \emptyset$. Note that every pair of elements $i \in A, j \in B$ satisfy $f(\{1,j\}) = 2 = f(\{i\} + f(\{j\})$. However, $f(j | A) = 0 \neq f(j) = 1, \forall j \in B$. Hence $A$ and $B$ are not $MI$ independent.
\end{proof}

Proof of Lemma~\ref{lemma:indep-rel-setcover}.
\begin{proof}
To prove this, we show that $JI \iff PI$. We know that $A \perp_f B$ (when $f(A) = w(\gamma(A))$) iff $\gamma(A) \cap \gamma(B) = \emptyset$. Similarly, $A \perp_f B$ with $PI$ if for every $a \in A, b \in B, \gamma(a) \cap \gamma(b) = \emptyset$. Note that this condition then also implies $\gamma(A) \cap \gamma(B) = \emptyset$ and hence $PI \implies JI$ which proves the equivalence. Since $JI \iff PI$, this means that all the four types are equivalent. Finally, $JI \centernot\implies ModI$ since the proof of Theorem~\ref{theorem:indep-relations} uses an instance of set cover to show that $JI \centernot\implies ModI$.
\end{proof}

Proof of Lemma~\ref{lemma:app-ent-ind-rel}.
\begin{proof}
Since entropy is a submodular function, the following holds from the proof of Lemma \ref{theorem:indep-relations}.
\begin{align}
   ModI \implies JI \implies MI \iff SMI \implies PI \; \\\text{and} \; ModI \implies SModI \implies MI
\end{align}

Next, we show that reverse implications do not hold. We start with $JI \centernot\implies ModI$. Consider $\Omega={X_1,X_2,X_3}$ with the joint distribution satisfying $P_{X_1,X_2,X_3}=P_{X_1,X_2}P_{X_3}$. For an example, let $X_1, X_2, X_3$ be three binary random variables (taking values in $\{ 0,1 \}$) such that $X_1, X_2$ are jointly distributed as $P_{X_1,X_2}(1,0) = P_{X_1,X_2}(0,1) = \frac{1}{8}, \; P_{X_1,X_2}(0,0) = \frac{1}{4}, \; P_{X_1,X_2}(1,1) = \frac{1}{2}$, and we have $X_3 \sim \text{Bern}(\frac{1}{2})$. 
Take $A=\{1,2\},\ B=\{3\},\ X=\{2\}, j=\{1\}$. Since the pair $X_1,X_2$ are independent of $X_3$, we have $H(A|B)=H(A)$ and $H(B|A)=H(A)$, i.e., they are $JI$ independent but we have $H(j|X) \neq H(j)$, implying A and B are not $ModI$ independent. 

For showing $JI\centernot\implies SModI$ we consider the same example as above with $P_{X_1,X_2,X_3}=P_{X_1,X_2}P_{X_3}$ and let $A=\{1,2\},\ B=\{3\},\ X=\{2\}, j=\{1\}$. $A$ and $B$ are $JI$ independent as shown above. However, this would not imply necessarily $SModI$ independence criterion which needs here pairwise independence between all the variables and that $H(X_3|X_1,X_2)=H(X_2)$.

Next, we show $SModI\centernot\implies ModI$ and $SModI\centernot\implies JI$. To prove them both, we consider the example with $\Omega=\{1,2,3,4\}$ with $X_i\in\{0,1\}$, $\forall\ i=1,2,3,4$ and $P_{X_1,X_2,X_3,X_4}(x_1,x_2,x_3,x_4)=\frac{1}{8}$ for $(x_1,x_2,x_3,x_4)\in \{(0,0,0,0),(0,0,1,1),(0,1,0,1),(0,1,1,0),(1,0,0,1),\\ (1,0,1,0),(1,1,0,0),(1,1,1,1)\}$. Note here we have $X_4=X_1\oplus X_2\oplus X_3\oplus X_4$. Let us have $A=\{1,2\}$ and $B=\{3,4\}$. Thus, with the chosen joint distribution we have $SModI$ to be true which needs pairwise independence and the mutual independence between any set of three random variables. Since $H(X_4|X_1,X_2,X_3)=0\neq H(X_4)=\frac{1}{2}$, $ModI$ is not true. Also as $P_{X_1,X_2,X_3,X_4}(0,0,0,0)=\frac{1}{8}\neq P_{X_1,X_2}(0,0)P_{X_3,X_4}(0,0)=\frac{1}{16}$, $ModI$ is not true. Therefore, $SModI\centernot\implies ModI$ and $SModI\centernot\implies JI$.

$MI \centernot\implies SModI$ follows since $JI \implies MI$ and $JI \centernot\implies SModI$. Similarly, $MI \centernot\implies JI$ follows since $SModI \implies MI$ and $SModI \centernot\implies JI$.

In the end, we show $PI \centernot\implies MI$. Here consider an example, where the assumed probability distribution on $X_1,X_2,X_3$ which take values in $\{0,1\}$, is as follows: $P_{X_1,X_2,X_3}(0,0,0)=P_{X_1,X_2,X_3}(0,1,1)=P_{X_1,X_2,X_3}(1,0,1)=P_{X_1,X_2,X_3}(1,1,0)=\frac{1}{4}$. It is easy to check here that there is pairwise independence, as, $X_i \sim \text{Bern}(\frac{1}{2})$ and that  $P_{X_i,X_j}(0,0) = P_{X_i,X_j}(1,0) = P_{X_i,X_j}(0,1) = P_{X_i,X_j}(1,1)=\frac{1}{4}$, $\forall i\neq j\in\{1,2,3\}$. 
Thus there is pairwise independence which implies $PI$ independence for sets $A=\{1,2\}$ and $B=\{3\}$. However there is no mutual independence here as we can see that $P_{X_1,X_2,X_3}(0,0,0)=\frac{1}{4}\neq\frac{1}{8}=P_{X_1}(0)P_{X_2}(0)P_{X_3}(0)$. In fact, by the way of construction of the example, knowing $X_1$ and $X_2$ one knows $X_3$ with certainty, which implies $H(X_3|X_1,X_2)=0 \neq H(X_3)=1$. This implies $MI$ independence does not hold for the sets as it requires at least $H(X_3|X_1,X_2)=H(X_3)$, which is not true in this example.
\end{proof}

\section{Conclusions}
To conclude, in this paper, we study different independence classes of combinatorial independence, and their relationships. We then discuss the implications of these results in the entropic case, and provide algorithms for obtaining the independent sets for different independence classes. Finally, we discuss some implications of the results for applications like summarization and clustering.
\bibliographystyle{plain}
\bibliography{references}

\begin{thebibliography}{10}

\bibitem{ash2019deep}
Jordan~T Ash, Chicheng Zhang, Akshay Krishnamurthy, John Langford, and Alekh
  Agarwal.
\newblock Deep batch active learning by diverse, uncertain gradient lower
  bounds.
\newblock {\em arXiv preprint arXiv:1906.03671}, 2019.

\bibitem{bach2011learning}
Francis Bach.
\newblock Learning with submodular functions: A convex optimization
  perspective.
\newblock {\em arXiv preprint arXiv:1111.6453}, 2011.

\bibitem{fujishige2005submodular}
S.~Fujishige.
\newblock {\em Submodular functions and optimization}, volume~58.
\newblock Elsevier Science, 2005.

\bibitem{levin2020online}
Anupam Gupta and Roie Levin.
\newblock The online submodular cover problem.
\newblock In {\em ACM-SIAM Symposium on Discrete Algorithms}, 2020.

\bibitem{iyer2020concave}
Rishabh Iyer and Jeff Bilmes.
\newblock Concave aspects of submodular functions.
\newblock {\em In Proc. ISIT}, 2020.

\bibitem{iyer2013fast}
Rishabh Iyer, Stefanie Jegelka, and Jeff Bilmes.
\newblock Fast semidifferential-based submodular function optimization:
  Extended version.
\newblock In {\em ICML}, 2013.

\bibitem{iyer2021submodular}
Rishabh Iyer, Ninad Khargoankar, Jeff Bilmes, and Himanshu Asanani.
\newblock Submodular combinatorial information measures with applications in
  machine learning.
\newblock {\em In Algorithmic Learning Theory}, 2021.

\bibitem{iyer2013submodularScsk}
Rishabh~K Iyer and Jeff~A Bilmes.
\newblock Submodular optimization with submodular cover and submodular knapsack
  constraints.
\newblock In {\em Advances in Neural Information Processing Systems}, pages
  2436--2444, 2013.

\bibitem{iyer2015submodular}
Rishabh~Krishnan Iyer.
\newblock {\em Submodular optimization and machine learning: Theoretical
  results, unifying and scalable algorithms, and applications}.
\newblock PhD thesis, 2015.

\bibitem{kaushal2020unified}
Vishal Kaushal, Suraj Kothawade, Ganesh Ramakrishnan, Jeff Bilmes, Himanshu
  Asnani, and Rishabh Iyer.
\newblock A unified framework for generic, query-focused, privacy preserving
  and update summarization using submodular information measures.
\newblock {\em arXiv preprint arXiv:2010.05631}, 2020.

\bibitem{kaushal2021prism}
Vishal Kaushal, Suraj Kothawade, Ganesh Ramakrishnan, Jeff Bilmes, and Rishabh
  Iyer.
\newblock Prism: A unified framework of parameterized submodular information
  measures for targeted data subset selection and summarization.
\newblock {\em arXiv preprint arXiv:2103.00128}, 2021.

\bibitem{killamsetty2021grad}
Krishnateja Killamsetty, Durga Sivasubramanian, Baharan Mirzasoleiman, Ganesh
  Ramakrishnan, Abir De, and Rishabh Iyer.
\newblock Grad-match: A gradient matching based data subset selection for
  efficient learning.
\newblock {\em In Proc. ICML}, 2021.

\bibitem{killamsetty2020glister}
Krishnateja Killamsetty, Durga Sivasubramanian, Ganesh Ramakrishnan, and
  Rishabh Iyer.
\newblock Glister: Generalization based data subset selection for efficient and
  robust learning.
\newblock {\em arXiv preprint arXiv:2012.10630}, 2020.

\bibitem{lovasz1983submodular}
L{\'a}szl{\'o} Lov{\'a}sz.
\newblock Submodular functions and convexity.
\newblock In {\em Mathematical Programming The State of the Art}, pages
  235--257. Springer, 1983.

\bibitem{tohidi2020submodularity}
Ehsan Tohidi, Rouhollah Amiri, Mario Coutino, David Gesbert, Geert Leus, and
  Amin Karbasi.
\newblock Submodularity in action: From machine learning to signal processing
  applications.
\newblock {\em IEEE Signal Processing Magazine}, 37(5):120--133, 2020.

\bibitem{tschiatschekLearningMixturesSubmodular2014a}
Sebastian Tschiatschek, Rishabh~K Iyer, Haochen Wei, and Jeff~A Bilmes.
\newblock Learning {{Mixtures}} of {{Submodular Functions}} for {{Image
  Collection Summarization}}.
\newblock In Z.~Ghahramani, M.~Welling, C.~Cortes, N.~D. Lawrence, and K.~Q.
  Weinberger, editors, {\em Advances in {{Neural Information Processing
  Systems}} 27}. 2014.

\bibitem{wei2015submodularity}
Kai Wei, Rishabh Iyer, and Jeff Bilmes.
\newblock Submodularity in data subset selection and active learning.
\newblock In {\em International Conference on Machine Learning}, pages
  1954--1963. PMLR, 2015.

\end{thebibliography}

%%%%%%
%% To balance the columns at the last page of the paper use this
%% command:
%%
%\enlargethispage{-1.2cm} 
%%
%% If the balancing should occur in the middle of the references, use
%% the following trigger:
%%
\IEEEtriggeratref{3}
\ifCLASSINFOpdf
  % \usepackage[pdftex]{graphicx}
  % declare the path(s) where your graphic files are
  % \graphicspath{{../pdf/}{../jpeg/}}
  % and their extensions so you won't have to specify these with
  % every instance of \includegraphics
  % \DeclareGraphicsExtensions{.pdf,.jpeg,.png}
\else
  % or other class option (dvipsone, dvipdf, if not using dvips). graphicx
  % will default to the driver specified in the system graphics.cfg if no
  % driver is specified.
  % \usepackage[dvips]{graphicx}
  % declare the path(s) where your graphic files are
  % \graphicspath{{../eps/}}
  % and their extensions so you won't have to specify these with
  % every instance of \includegraphics
  % \DeclareGraphicsExtensions{.eps}
\fi

\end{document}